\newcommand\eat[1]{}
  \journalname{}
\newcommand{\eg}{e.g.,\xspace}
\newlength{\wordlength}
\newcommand{\wordbox}[3][c]{\settowidth{\wordlength}{#3}\makebox[\wordlength][#1]{#2}}
\newcommand{\pref}{\succsim\xspace}
\newcommand{\Pref}[1][]{
	\ifthenelse{\equal{#1}{}}{\mathrel \succsim}{\mathop{\succsim_{#1}}}
}                                          
\newcommand{\sPref}[1][]{                  
	\ifthenelse{\equal{#1}{}}{\mathrel \succ}{\mathop{\succ_{#1}}}
}                                          
\newcommand{\Indiff}[1][]{                 
	\ifthenelse{\equal{#1}{}}{\mathrel \sim}{\mathop{\sim_{#1}}}
}
\newcommand{\prefset}[1][]{\ifthenelse{\equal{#1}{}}{\mathcal{\succsim}}{\mathcal{\succsim}_{#1}}}
\newcommand{\midd}{\mathbin{: }}
\DeclarePairedDelimiter\floor{\lfloor}{\rfloor}
\let\enumtemp=\enumerate
\def\enumerate{\enumtemp\itemsep 1pt}
\let\itemtemp=\itemize
\def\itemize{\itemtemp\itemsep 1pt}
\newcommand{\Omit}[1]{}
\begin{document}

		\title{A Generalization of the AL method \\for Fair Allocation of Indivisible Objects}


	\author{Haris Aziz}


	\institute{%
	  Haris Aziz \at
	  NICTA and UNSW
	  2033 Sydney, Australia \\
	  Tel.: +61-2-8306\,0490 \\
	  Fax: +61-2-8306\,0405 \\
	  \email{haris.aziz@nicta.com.au}
	}


\maketitle

	\begin{abstract}
		We consider the assignment problem in which agents express ordinal preferences over $m$ objects and the objects are allocated to the agents based on the preferences.
In a recent paper, Brams, Kilgour, and Klamler (2014) presented the AL method to compute an envy-free assignment for two agents. The AL method crucially depends on the assumption that agents have strict preferences over objects. We generalize the AL method to the case where agents may express indifferences and prove the axiomatic properties satisfied by the algorithm. 
As a result of the generalization, we also get a $O(m)$ speedup on previous algorithms to check whether a complete envy-free assignment exists or not.  Finally, we show that unless P=NP, there can be no polynomial-time extension of GAL to the case of arbitrary number of agents.
	\end{abstract}
	
	\keywords{Fair division \and envy-freeness \and Pareto optimality \and AL method\\}

\noindent
\textbf{JEL Classification}: C70 $\cdot$ D61 $\cdot$ D71

%

	\section{Introduction}
	
Fair allocation of resources is one of the most critical issues for society. A basic, yet widely applicable, problem in computer science and economics is to allocate discrete objects to agents given the ordinal preferences of the agents over the objects. The setting is referred to as the \emph{assignment problem} or the \emph{house allocation problem}~\citep[see, \eg][]{ACMM05a,AGMW14a,BEL10a,BrKa04a,BEF03a,BrFi00a,BKK12a,DeHi88a,Gard73b,Manl13a,Wils77a,Youn95b}. 
		In this setting, there is a set of agents $N=\{1,\ldots, n\}$, a set of objects $O=\{o_1,\ldots, o_{m}\}$ with each agent $i\in N$ expressing ordinal preferences $\pref_i$ over $O$. Each object is assumed to be acceptable to the agents. The goal is to allocate the objects among the agents in a fair or optimal manner without allowing transfer of money. 
		The model is applicable to many resource allocation or fair division settings where the objects may be public houses, school seats, course enrolments, kidneys for transplant, car park spaces, chores, joint assets of a divorcing couple, or time slots in schedules. 
		
For the assignment problem, the case of two agents is especially central. Many disputes are between two parties and may require division of common resources. Divorce proceedings is one of the settings in which common assets need to be divided among the two parties. Other examples in history include partition of countries which results in the need to divide common assets.

When objects are allocated among agents, it is desirable that they are allocated in a fair and efficient manner. 
For fairness, one of the most established concepts is envy-freeness. A formal study of envy-freeness in microeconomics can be traced back to the work of \citet{Fole67a}.
Envy-freeness requires that each agent should prefer its allocation over other agents' allocations. Envy-freeness can be trivially satisfied by not giving any objects to any agents. However, if we insist 
that the assignment should be \emph{complete}, i.e., it allocated all the objects to the agents, 
no assignment may be envy-free as is the case in which there is only one object and the agent who does not get any object is envious. 
The most established notion of efficiency is Pareto optimality which requires that there should be no other allocation which each agent weakly prefers and at least one agent strictly prefers. Pareto optimality has been termed the \emph{``single most important tool of normative economic analysis''}~\citep{Moul03a}. 
An assignment which gives no objects to agents is clearly not Pareto optimal.


In view of the importance of the two-agent setting, and the fundamental goals of envy-freeness and Pareto optimality, \citet{BKK14a} presented an elegant algorithm called AL for the case of two agents that computes a maximal assignment that is envy-free as well as \emph{locally Pareto optimal} (Pareto optimal for the set of allocated objects).\footnote{The notion of envy-freeness that they use is equivalent to SD (stochastic dominance) envy-freeness~\citep{AGMW14a} and necessary envy-freeness~\citep{BEL10a}.} 
Since there may not exist a Pareto optimal and envy-free assignment, \citet{BKK14a} relax the requirement of Pareto optimality to local Pareto optimality. 
The algorithm has received attention in the literature~\citep[see \eg][]{Bram14a,Bram14b,BCM15a,DGG+14a,PrWa14a}.
The desirable aspect of AL is that is returns a locally Pareto optimal and a maximal envy-free assignment. 
By maximal, we mean that unallocated objects cannot be additionally given to the agents' partial allocations without compromising envy-freeness. \citet{BKK14a} also claim that AL returns a complete envy-free assignment if there exists a complete envy-free assignment.
One possible limitation of the AL method is that it assumes that agents have strict preferences over objects. We present a generalization of the AL method in which agents may express indifferences among agents.

Indifferences in preferences are not only a natural relaxation but are also a practical reality in many cases. 
For example, if there are multiple copies of the same object with the same characteristics, then an agent is invariably indifferent among all such copies.
Indifferences can lead to various challenges. The complexity of solution concepts in the presence of indifferences can be considerably more than in the case of strict preferences. A famous example is that of roommate markets for which the problem of finding a stable matching is polynomial-time solvable for strict preferences but NP-complete for weak orders~\citep{Ronn90a}. Similarly, a number of fairness concepts are harder to compute when weak orders are allowed~\citep{AGMW14a}. In view of this, effort has been taken to generalize algorithms and rules for the case of indifferences in 
voting~\citep[see \eg][]{ABBH12a,CHP13a}, housing markets~\citep[see \eg][]{AzKe12a,SeSa13a}, coalition formation~\citep{ABH11c}, and various matching market models~\citep{IwMi08a,Manl13a,Scot05a}. 
The main contribution of this paper is a generalization of AL  which we refer to as GAL for the case in which agents may express indifferences. 
The main result of the paper is as follows. 

		\begin{theorem}
For two agents, GAL  returns in time $O(m^2)$ a maximal envy-free and locally Pareto optimal assignment even if agents express weak preferences. If a complete envy-free assignment exists, GAL computes a complete envy-free assignment.
Moreover, there exists no other assignment that Pareto dominates it and is still envy-free. 
			\end{theorem}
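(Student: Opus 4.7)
The plan is to verify each property claimed by the theorem in turn, leaning on the invariants maintained by GAL during its execution. Based on the description of AL in the introduction, I expect GAL to proceed in rounds, where in each round one identifies each agent's top indifference class among still-unallocated objects and uses a small matching subroutine to distribute objects so that neither agent SD-envies the other; objects whose allocation would break this invariant are set aside as \emph{contested}. From this structure, the $O(m^2)$ time bound follows quickly: there are at most $m$ rounds, since each round either allocates an object or advances some agent to its next indifference class, and each round needs only $O(m)$ work to scan the top classes and resolve the matching between two agents.

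The properties of envy-freeness, local Pareto optimality, and maximality I would then establish as structural invariants. Envy-freeness is maintained inductively: within each round, GAL either augments both bundles symmetrically from objects in the top classes or declares objects contested, so the partial assignment remains SD-envy-free throughout. Local Pareto optimality follows because every allocated object is, at the time of allocation, in the recipient's current top class, so no reshuffling among allocated objects can strictly improve any agent. Maximality holds by the termination rule: GAL stops only when no remaining object can be added to either bundle while preserving envy-freeness.

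The hardest part will be the two stronger claims: that GAL returns a complete envy-free assignment whenever one exists, and that no envy-free assignment Pareto dominates GAL's output. For the ``complete-if-possible'' claim I would argue the contrapositive: assume GAL's output $A$ is incomplete and show that no complete envy-free assignment $A^*$ can exist. This calls for an exchange argument, transforming a hypothetical $A^*$ round-by-round into an assignment that agrees with GAL's choices while preserving envy-freeness, so that the set of contested objects at termination forms a genuine obstruction---formalizable as a Hall-type condition on the bipartite graph between contested objects and the agents' remaining preference layers. The subtlety introduced by indifferences is that GAL must resolve ties within top classes, and I need to show that no tie-breaking choice prematurely forecloses completeness; this is where the design of the in-round matching subroutine carries the weight of the proof, via a ``dominated distribution'' or symmetric-swap argument. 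The Pareto-undomination claim then follows by a similar exchange argument: any envy-free assignment strictly dominating $A$ would have to allocate a currently-contested object, contradicting the fact that GAL would have incorporated it had an envy-preserving extension existed.
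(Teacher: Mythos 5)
Your plan identifies the right list of properties and the right general shape (round-based invariants, an $O(m)$ check per round, a Hall-type obstruction for completeness), but several of the load-bearing steps are either glossed over or argued for the wrong reason. The most serious gap is local Pareto optimality. Your claim that ``every allocated object is, at the time of allocation, in the recipient's current top class, so no reshuffling can strictly improve any agent'' does not survive indifferences: agent $1$ may receive $o$ from its current top class in round $3$ while agent $2$ received $o'$ in round $1$ with $o' \succ_1 o$ and $o \sim_2 o'$, and then swapping $o$ and $o'$ is a Pareto improvement even though both objects were top-class at the moment they were assigned. The paper blocks exactly this by \emph{designing} the priority orders so that, among objects an agent is indifferent between, it takes first the one the other agent likes less; the proof then needs a separate lemma (a trading-cycle argument on a cloned instance) showing that for two agents any Pareto improvement reduces to such a pairwise swap. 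Without that lemma and without tying the argument to the tie-breaking rule, your LPO step fails. A related subtlety affects maximality: an object is sent to the contested pile based on a test at one particular round, and you must additionally argue that it cannot become addable \emph{after} later rounds allocate more objects (the paper does this by showing the first contested object is strictly preferred by the blocked agent to everything still unallocated).

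Your justification for Pareto-undomination is also off: an envy-free assignment $q$ that SD-dominates the GAL output need not touch any contested object --- it could simply permute the same allocated objects more favorably --- so ``$q$ would have to allocate a contested object'' is not a valid reduction to maximality. The paper instead compares $p$ and $q$ round by round and derives a contradiction at the first round where some agent does strictly better in $q$, using the fact that GAL always hands out the highest-priority feasible objects. Finally, for the complete-if-possible claim your Hall-type instinct is correct in spirit, but the paper's proof is a direct counting argument applied to the \emph{first} contested object via a characterization of two-agent EF (agent $i$ must hold at least half of the objects it weakly prefers to any allocated $o$), not a round-by-round exchange transforming an arbitrary complete EF assignment into GAL's; the exchange route you propose would additionally have to handle the tie-breaking freedom you mention, which the counting argument sidesteps entirely.
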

	
	Previously, ~\citet{BEL10a} and \citet{AGMW14a} presented $O(m^3)$ time algorithms to check whether a \emph{complete} envy-free assignment for two agents exists or not. The algorithms require solving network flow or maximum matching problems. As a corollary of GAL, we obtain a simple $O(m^2)$ algorithm to check whether there exists a complete assignment that is EF.

	The critical reader may ask whether GAL can be generalized to handle an arbitrary number of agents.
We show that unless complexity classes P and NP coincide~\citep{Fort13a}, there exists \emph{no} polynomial-time algorithm for an arbitrary number of agents that satisfies the same properties as GAL.
	
	\section{Related work}


	Computation of fair discrete assignments has been intensely studied in the last decade. In many of the papers considered, agents express cardinal utilities for the objects and the goal is to compute fair assignments~\citep[see \eg][]{LMMS04a,PrWa14a}. 
We consider the setting in which agents only express ordinal preferences over objects~\citep{AGMW14a,BEL10a,BrKa04a,BEF03a,BrFi00a,PrWo12a} which are less demanding to elicit. 
	
	When agents express preferences over objects and we need to reason about preferences over allocations, there are different ways one can define envy-freeness such as possible envy-freeness and weak SD envy-freeness~\citep{AGMW14a}. In this paper we will use the strongest known reasonable notion of envy-freeness. The notion is equivalent to necessary envy-freeness in~\citep{BEL10a}, SD-envy-freeness in ~\citep{AGMW14a}, EF notion used in ~\citep{BKK14a} and pair-wise envy-freeness in ~\citep{BKK14b}. We will refer to the notion simply as EF just like \citet{BKK14a} do.
\citet{AGMW14a} and \citet{BEL10a} presented $O(m^3)$ algorithms to check whether there exists an EF assignment. We show that there exists a simple $O(m^2)$ algorithm for the problem even if agents express weak preferences. 		
	

There are other papers~\citep{Aziz14a,CDE+06a,KBKZ09a} in fair division in which agents  explicitly express ordinal preferences over sets of objects rather than simply expressing preferences over objects~\citep{Aziz14a,CDE+06a,KBKZ09a}. For these more expressive models, the computational complexity of computing fair assignments is either even higher~\citep{CDE+06a,KBKZ09a} or representing preferences requires exponential space~\citep{Aziz14a,BKK12a}. In this paper, we 
restrict agents to simply express ordinal preferences over objects.

	\section{Preliminaries}

	An assignment problem is a triple $(N,O,\pref)$ such that $N=\{1,\ldots, n\}$ is a set of agents, $O=\{o_1,\ldots, o_m\}$ is a set of objects, and the preference profile $\pref=(\pref_1,\ldots, \pref_n)$ specifies for each agent $i$ its preference $\pref_i$ over $O$.
	Agents may be indifferent among objects. We will denote by $\succ_i$ the strict part and by $\sim_i$ the indifference part of the relation $\pref_i$.
	We denote $\pref_i: E_i^1,\dots,E_i^{k_i}$ for each agent $i$ with equivalence classes in decreasing order of preferences.
	Thus, each set $E_i^j$ is a maximal equivalence class of objects among which agent $i$ is indifferent, and $k_i$ is the number of equivalence classes of agent $i$. If an equivalence class is a singleton $\{o\}$, we list the object $o$ in the list without the curly brackets. A preference profile consists of \emph{dichotomous preferences} each agent has at most two equivalence classes. A preference profile consists of \emph{strict preferences} each agent has strict preferences over the objects.
	
\begin{definition}[Assignment]	
An assignment $p=(p(1),\ldots, p(n))$ specifies the \emph{allocation} of objects $p(i)$ to each $i\in N$ such that $p(i)\subseteq O$ and $p(i)\cap p(j)=\emptyset$ for all $i\neq j$. An assignment is \emph{complete} if $o\in \bigcup_{i\in N}p(i)$ for all $o\in O$.
\end{definition}

	%

We define the \emph{SD (stochastic dominance)} relation. An agent \emph{SD-prefers} one allocation over another if for each object, the former allocation gives the agent as much probability of getting at least preferred an object as the latter allocation.
\begin{definition}[SD (stochastic dominance)]
Given two assignments $p$ and $q$, $p(i) \succsim_i^{SD} q(i)$, i.e., agent $i$ \emph{SD~prefers} allocation $p(i)$ to allocation $q(i)$ if for each $o\in O$,
		\[ \left|\{o'\midd o'\pref_i o\} \cap p(i)\right| \geq \left|\{o'\midd o'\pref_i o\} \cap q(i)\right|.\]
Agent $i$ \emph{strictly SD prefers} $p(i)$ to $q(i)$: $p(i)\succ_i^{SD} q(i)$ if $p(i) \succsim_i^{SD} q(i)$ and $\neg [q(i) \succsim_i^{SD} p(i)]$.
\end{definition}

 Although each agent $i$ expresses ordinal preferences over objects, he could have a private cardinal utility $u_i$ consistent with $\pref_i$: $u_i(o)\geq u_i(o') \text{ if and only if } o\pref_i o'.$

\begin{definition}[SD-efficiency]
An assignment $p$ is \emph{SD-efficient} if  there exists no other assignment $q$ such that $q(i)\pref_i^{SD} p(i)$ for all $i\in N$ and $q(i)\succ_i^{SD} p(i)$ for some $i\in N$. 
\end{definition}
SD-efficiency is equivalent to Pareto optimality for discrete assignments as defined by \citep{BKK14a}. Hence we will refer to SD-efficiency as Pareto optimality and SD-domination as Pareto domination.

\begin{definition}[Locally Pareto optimal]
An assignment $p$ is is \emph{LPO (locally Pareto optimal)} if  there exists no other assignment $q$ such that $\bigcup_{i\in N}p(i)=\bigcup_{i\in N}q(i)$ and $q(i)\pref_i^{SD} p(i)$ for all $i\in N$ and $q(i)\succ_i^{SD} p(i)$ for some $i\in N$. 
\end{definition}

	\begin{definition}[SD envy-freeness]
An assignment $p$ satisfies \emph{SD envy-freeness} or is \emph{SD envy-free} if each agent  SD prefers its allocation to that of any other agent: 
   	\[p(i) \pref_i^{SD} p(j) \text{ for all } i,j\in N.\]
	\end{definition}
	From the definition it is easy to see that a necessary condition for SD envy-freeness is that each agent gets the same number of objects.

	\citet{BKK14a} defined EF as follows.\footnote{We use the same definition as in \citep{BKK14a} except that we use weakly prefers rather than strictly prefers since we are considering weak preferences.} 
	
	\begin{definition}[EF (envy-freeness)]
	An allocation $p$  is \emph{EF (envy-free)} if for all $i,j\in N$ $|p(i)|=|p(j)|$ and there exists an injection $f_{ij}:p(i)\rightarrow p(j)$  and an injection $f_{ji}:p(j)\rightarrow p(i)$ such that for each object $o\in p(i)$, $i$ (weakly) prefers $o$ to $f_{ij}(o)$ and for each object $o\in p(j)$, $j$ (weakly) prefers $o$ to $f_{ji}(o)$. 
	\end{definition}
	
	Then, by using a similar argument as \citep[Lemma 1, ][]{BKK14a}, we can show that EF is equivalent to SD envy-freeness. We detail the argument for the sake of completeness and to formally extend Lemma~1\citep{BKK14a} to the case of indifferences.

	\begin{lemma}
	EF is equivalent to SD envy-freeness. 
	\end{lemma}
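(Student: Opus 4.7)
The plan is to establish both directions of the equivalence separately, for each ordered pair of agents $i,j$. As a common preliminary, note that either definition forces $|p(i)|=|p(j)|$: EF requires this by fiat, and for SD envy-freeness it follows by applying $p(i)\succsim_i^{SD} p(j)$ at any least-preferred object of $i$ (which makes the threshold set equal $O$) and symmetrically applying $p(j)\succsim_j^{SD} p(i)$. Consequently the maps $f_{ij}$ and $f_{ji}$ appearing in the EF definition are bijections whenever they exist.

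For the direction EF $\Rightarrow$ SD envy-freeness, let $g=f_{ij}^{-1}:p(j)\to p(i)$, which satisfies $g(o')\succsim_i o'$ for every $o'\in p(j)$. Fix an arbitrary threshold $o\in O$: for any $o'\in p(j)$ with $o'\succsim_i o$, transitivity gives $g(o')\succsim_i o$, so $g$ restricts to an injection from $\{o'\in p(j):o'\succsim_i o\}$ into $\{o''\in p(i):o''\succsim_i o\}$. Comparing cardinalities yields the SD inequality for agent $i$, and the symmetric argument using $f_{ji}$ handles $j$.

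For the direction SD envy-freeness $\Rightarrow$ EF, it suffices to construct $f_{ij}$; the construction of $f_{ji}$ is symmetric. Enumerate $i$'s equivalence classes $E_i^1,\dots,E_i^{k_i}$ in decreasing order of preference, and list the objects of $p(j)$ as $o'_1,\dots,o'_{|p(j)|}$ so that the level $l_k$ of the class containing $o'_k$ is non-decreasing in $k$. Build $g=f_{ij}^{-1}:p(j)\to p(i)$ greedily: at step $k$, let $g(o'_k)$ be any still-unmatched element of $p(i)\cap(E_i^1\cup\cdots\cup E_i^{l_k})$. Such an element exists because at step $k$ at most $k-1$ objects at levels $\leq l_k$ of $p(i)$ have been used, while the SD inequality applied at any object in $E_i^{l_k}$ yields $\sum_{l\leq l_k}|E_i^l\cap p(i)|\geq \sum_{l\leq l_k}|E_i^l\cap p(j)|\geq k$. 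By construction $g$ is an injection with $g(o'_k)\succsim_i o'_k$ for every $k$, so $f_{ij}=g^{-1}$ witnesses EF.

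The only non-trivial step is the greedy construction in the last paragraph: with indifferences, several objects of $p(i)$ and $p(j)$ may lie in the same equivalence class, so one cannot simply match the $k$-th most preferred object on one side to the $k$-th on the other (as is implicit in the strict-preference proof of \citet{BKK14a}). Processing $p(j)$ in non-decreasing level order and invoking the SD inequality at the current level rather than at individual objects keeps the counting argument tight. An alternative, slightly less elementary route is to apply Hall's marriage theorem to the bipartite graph on $p(i)\cup p(j)$ with an edge $\{o,o'\}$ whenever $o\succsim_i o'$; the SD inequality is exactly Hall's condition in this graph, and any perfect matching supplies $f_{ij}$.
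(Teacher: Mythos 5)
Your proof is correct. The first direction coincides with the paper's argument: chain $o'\succsim_i f_{ij}(o')\succsim_i o$ and use injectivity of $f_{ij}$ to compare the cardinalities of upper contour sets. The second direction takes a genuinely different route. The paper argues contrapositively: if EF fails for agent $i$, the bipartite graph on $p(i)\cup p(-i)$ with edges $\{o,o'\}$ whenever $o'\succsim_i o$ admits no perfect matching, Hall's theorem yields a deficient set $O'\subseteq p(-i)$, and evaluating the SD inequality at a $\pref_i$-minimal element of $O'$ exhibits the violation --- this is exactly the ``alternative, slightly less elementary route'' you mention at the end. You instead construct the witnessing injection directly by a greedy level-by-level matching, with availability at each step guaranteed by the SD inequality evaluated at the current equivalence class. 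Because the relevant bipartite graph has nested neighbourhoods (the neighbourhood of an object of $p(j)$ is an upper contour set for $i$), Hall's condition collapses to the finitely many level inequalities you verify, so the greedy construction is a self-contained, constructive specialization of the paper's Hall argument; it also pinpoints correctly why the rank-by-rank matching implicit in the strict-preference case breaks under indifference. A small bonus of your write-up is that you make explicit that SD envy-freeness already forces $|p(i)|=|p(j)|$ --- needed both to invert $f_{ij}$ in the first direction and to turn your greedy injection into a bijection --- whereas the paper only asserts this as an easy consequence of the definition.
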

	\begin{proof}
		We first show that EF implies SD envy-freeness. Suppose $p$ satisfies EF and take any object $o\in O$. 
	Suppose that there is an object $o'\in p(i)$ such that $f_i(o')\succsim_i o$. By the definition of $f_i$, we know that $o'\pref_i f_i(o')$. Since  $f_i(o')\succsim_i o$, we get that $o'\pref_i f_i(o')\succsim_i o$. Hence, 	

			\[ \left|\{o'\midd o'\pref_i o\} \cap p(i)\right| \geq \left|\{o'\midd o'\pref_i o\} \cap p(-i)\right|.\]


	We now show that SD envy-freeness implies EF.
	Suppose that assignment $p$ does not satisfy EF with the EF condition violated for agent $i$. 
	Consider a bipartite graph $G=(p(-i)\cup p(i),E)$ where $\{o,o'\}\in E$ if $o\in p(-i)$, $o'\in p(i)$, and $o'\succsim_i o$. Since $p$ does not satisfy EF for $i$,  $G$ does not admit a perfect matching. By Hall's theorem, there exists set $O'\subseteq p(-i)$ such that $|N(O')|<|O'|$ where $N$ is the neighbourhood of $O'$ in the graph $G$. Consider an object $o\in \min_{\pref_i}(O')$. Since, $|N(O')|<|O'|$,  this it implies that 
			\[ \left|\{o'\midd o'\pref_i o\} \cap p(i)\right| < \left|\{o'\midd o'\pref_i o\} \cap p(-i)\right|.\]

	But then $p$ does not satisfy SD envy-freeness.
	\qed
	\end{proof}

	\begin{lemma}\label{lemma:efcheck}
If the number agents is constant, it can be checked in $O(m)$ time whether a given assignment is EF or not.
		\end{lemma}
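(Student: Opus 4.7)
The plan is to reduce EF to SD envy-freeness via the preceding lemma, and to verify the SD condition for each ordered pair of agents by a single linear sweep through one agent's preference list. Since $n$ is constant, only $O(1)$ ordered pairs arise, so if each sweep runs in $O(m)$ we get the desired $O(m)$ bound overall.

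I would first do an $O(m)$ preprocessing pass over the assignment $p$ to build an inverse map $M \colon O \to N \cup \{\bot\}$ sending each object to the (unique) agent holding it, or $\bot$ if it is unassigned, and simultaneously record $|p(i)|$ for each $i \in N$. If two allocations have different sizes, reject immediately, since equal cardinalities are necessary for SD envy-freeness. For each ordered pair $(i,j)$ of distinct agents I would then walk through agent $i$'s equivalence classes $E_i^1, \ldots, E_i^{k_i}$ in order, maintaining counters $c_i$ and $c_j$ initialised to $0$. Each scanned object $o$ bumps $c_i$ if $M(o) = i$ and $c_j$ if $M(o) = j$; after finishing each class I check $c_i \geq c_j$, rejecting as soon as this inequality fails. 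If no pair triggers a rejection, accept.

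Correctness rests on the observation that the set $\{o' \midd o' \pref_i o\}$ appearing in the SD condition depends on $o$ only through its $\pref_i$-equivalence class: for any $o \in E_i^\ell$ one has $\{o' \midd o' \pref_i o\} = E_i^1 \cup \cdots \cup E_i^\ell$. Hence verifying the counter inequality once per class is both necessary and sufficient for SD envy-freeness of $(i,j)$. Each sweep visits every object exactly once with $O(1)$ work per visit, costing $O(m)$; summing over the $O(n^2) = O(1)$ ordered pairs and the preprocessing yields $O(m)$ in total. The only subtlety — and what I would flag as the main thing to get right — is the input representation: the preferences must be supplied (as they are throughout the paper) as ordered lists of equivalence classes so that a single scan exposes the classes in preference order without any sorting step, which would otherwise push the cost up to $O(m \log m)$.
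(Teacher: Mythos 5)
Your proposal is correct and follows essentially the same route as the paper: both reduce EF to SD envy-freeness via the equivalence lemma and verify the SD condition by accumulating per-equivalence-class counts of each agent's objects and comparing the cumulative totals at each class boundary, for each of the $O(1)$ ordered pairs of agents. Your running-counter sweep is just an in-place version of the paper's explicit construction of the cumulative vectors $s$ and $t$, and your remark about the input being given as ordered equivalence classes matches the paper's implicit assumption.
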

		\begin{proof}
We show that it can be checked in $O(m)$ time whether a given assignment for a constant number agents is SD envy-free or not.
		We first show that an SD comparison between any two allocations can be made in $O(m)$ time. 
		Let us say that we want to check whether $p(i)\succsim_i^{SD} p(-i)$ where $-i$ is some agent other than $i$.  Without loss of generality, assume that $i$'s preferences are a coarsening of linear order $o_1,\ldots, o_m$.
		\begin{itemize}
			\item We construct in $O(m)$ a vector $x(p(i))=(x_1,\ldots, x_m)$ where $x_i=1$ if $o_i\in p(i)$ and $x_i=0$ otherwise. 
Using $x(p(i))$ we construct in $O(k_i)$ time a vector $s'(p(i))=(s_1',\ldots, s_{k_i}')$ where $s_j'=|\{E_i^j\}\cap p(i)|$. Using $s'(p(i))$ we construct in $O(k_i)$ time a vector $s(p(i))=(s_1,\ldots, s_{k_i})$ where $s_j=\sum_{\ell=1}^js_j'$.
\item In a similar way, we construct 
in $O(m)$ a vector $y(p(-i))=(y_1,\ldots, y_m)$ where $y_i=1$ if $o_i\in p(-i)$ and $x_i=0$ otherwise. Using $y(p(i))$ we construct in $O(k_i)$ time a vector $t'(p(-i))=(t_1',\ldots, t_{k_i}')$ where $t_j=|\{E_i^j\}\cap p(-i)|$. Using $t'(p(-i))$ we construct in $O(k_i)$ time a vector $t(p(i))=(t_1,\ldots, t_{k_i})$ where $t_j=\sum_{\ell=1}^jt_{\ell}'$.
		\end{itemize}

Now $p(i)\succsim_i^{SD} p(-i)$ iff $s_j\geq t_j$ for all $j\in \{1,\ldots, k_i\}$. This again takes time $O(k_i)$. Hence an SD comparison between allocation takes time $O(m)+4O(k_i)=O(m)$.

In order to test EF, we need to make $n(n-1)$ comparisons which is constant if $n$ is constant. Hence testing EF of an assignment for constant number of agent takes time $O(m)$.\qed
\end{proof}

If the number of agents is not constant, then the time complexity is $O(n^2m)$. In the paper, we will assume that $n=2$ i.e., there are two agents. If we refer to some agent as $i\in \{1,2\}$, then we will refer to the other agent as $-i$. Even for more than two agents, we may refer to $-i$ as some agent other than $i\in N$.

	Finally, define \emph{maximal envy-freeness}.
	\begin{definition}[Maximally envy-free assignment]
	We say that a partial assignment $p$ is \emph{maximally envy-free} if it is envy-free and
	there exists no assignment $q$ such that $q(i)\supseteq p(i)$ for all $i\in N$, $q(i)\supset p(i)$ for some $i\in N$, and $q$ is envy-free.
\end{definition}

	%
	%
	%
	%
	%
	%

	%
	%

	%
	%
	%

	\section{GAL --- Generalized AL}


	Before we delve into GAL, we first informally describe a simplified version of AL that still satisfies the properties of AL as described in \citep{BKK14a}. Agents have strict preferences and in each round they pick one object each. The algorithm repeats the following until all objects have been allocated to agent $1$, $2$, or contested pile $C$. We will refer to an object as \emph{unallocated} if it has not been allocated to $1$ or $2$ or placed in $C$.
If the most preferred unallocated object of the agents is not the same, each agent picks its most preferred object. Otherwise, if the most preferred unallocated object $o$ coincides, then we check whether we can give it to agent $1$. If $o$ is given to agent $1$ and the next most preferred unallocated object is given to agent $2$ and the partial assignment satisfies EF, then we allow such an allocation in the round. If not, we check in the same way whether we can give it to agent $2$.\footnote{This feasibility check is phrased in a different way in the original description of AL but is equivalent to checking for EF.} If $o$ cannot be given to either of the two, we put it in $C$. 
	
	%
	%
	%
	%
	%

	The general idea of GAL is as follows. Since the preferences of the two agents are weak orders, we first construct unique linear orders called \emph{priority orders} based on the preferences. Although, the comparisons to check the feasibility of EF assignments are still done with respect to the original preferences, the constructed linear orders help identify which unique object should each agent try to get first. 
The priority orders are refinements of the preferences where, if an agent is indifferent between two objects, it has higher priority for the object less preferred by the other agent. If both agents are indifferent among two objects, then agent 1 has higher priority for the object with the lower index and agent 2 has higher priority for the object with the higher index. After suitably constructing the linear orders, $>_1$ and $>_2$, agents try to take the highest priority. If agents have a different highest priority object, they take their highest priority objects. Otherwise there is a conflict so we must try to give one of the agents the highest priority object and give the other agent the second highest priority object according to the priority list if it does not violate EF.\footnote{The view of EF as being defined with respect to the SD relation makes it easy to argue for a maximal EF assignment.} 
If this cannot be done, we send the contested object to $C$, the so called \emph{contested pile}. A key idea behind GAL  is that if an object $o^*$ is sent to the contested pile, then it cannot be the case that $o^*$ along with 
some subsequent less preferred objects are allocated to agents and EF is not violated.
The algorithm is formally defined as Algorithm~\ref{algo:GAL}. Note there is an asymmetry in the algorithm in that agent one is considered first to get object $o^*$ in Step~\ref{step:agent1}. One can consider any of the two agents first or even toss a coin to select one agent. The properties of the algorithm are not affected. 
	
%
%
%
%
%
	
		\begin{algorithm}[h!]
		  \caption{GAL --- algorithm for envy-free assignment of indivisible objects to two agents}
		  \label{algo:GAL}
		\renewcommand{\algorithmicrequire}{\wordbox[l]{\textbf{Input}:}{\textbf{Output}:}} 
		 \renewcommand{\algorithmicensure}{\wordbox[l]{\textbf{Output}:}{\textbf{Output}:}}
		\algsetup{linenodelimiter=\,}
		\begin{algorithmic}
			\REQUIRE $((\pref_1,\pref_2), O)$
			\ENSURE EF assignment $p$
		\end{algorithmic}
		  \begin{algorithmic}[1] 
			  \normalfont
			  \STATE Construct a linear order $>_1$ for agent $1$: for all $i,j\in \{1,\ldots, m\}$, $o_i>_1 o_j$ if $o_i\succ_1 o_j$; $o_j>_1 o_i$ if $o_i\succ_2 o_j$ and $o_i\sim_1 o_j$; $o_i>_1 o_j$ if $o_i\sim_2 o_j$ and $o_i\sim_1 o_j$ and $i<j$
			  \STATE Construct the linear order $>_2$ for agent $2$: for all $i,j\in \{1,\ldots, m\}$, $o_i>_2 o_j$ if $o_i\succ_2 o_j$;
$o_j>_2 o_i$ if $o_i\succ_1 o_j$ and $o_i\sim_2 o_j$;
 $o_j>_2 o_i$ if $o_i\sim_2 o_j$ and $o_i\sim_1 o_j$ and $i<j$.

\STATE $O'\longleftarrow O$
\STATE $p(1)\longleftarrow \emptyset$; $p(2)\longleftarrow \emptyset$
	\STATE $C \longleftarrow \emptyset$ 
\STATE round number  $t\longleftarrow 0$

			\WHILE{$O'\neq \emptyset$}
			\STATE $t\longleftarrow t+1$
			\IF{$|O'|=1$}\label{step:O=1}
			\STATE \label{stepb:O=1} $C\longleftarrow C \cup O'$	
			\ELSIF{$\max_{>_1}(O')\neq \max_{>_2}(O')$}
			\STATE $p(1)\longleftarrow p(1)\cup \max_{>_1}(O')$
			\STATE $p(2)\longleftarrow p(2)\cup \max_{>_2}(O')$
			\STATE $O'\longleftarrow O'\setminus \{\max_{>_1}(O'),\max_{>_2}(O')\}$
						\ELSIF{$\max_{>_1}(O')=\max_{>_2}(O')$}
						\STATE $o^*\longleftarrow \max_{>_1}(O')$ (or $\max_{>_2}(O')$)
						\STATE $O'\longleftarrow O'\setminus \{o^*\}$
						\IF{$(p(1)\cup \{o^*\},p(2)\cup 
						 \{\max_{>_2}(O')\})$ is EF w.r.t $\pref$}\label{step:agent1}
						\STATE $p(1)\longleftarrow p(1)\cup \{o^*\}$
						\STATE $p(2)\longleftarrow p(2)\cup  \{\max_{>_2}(O')\}$
						\STATE $O' \longleftarrow O'\setminus \{\max_{>_2}(O')\}$
						\ELSIF{$(p(1)\cup \{\max_{>_1}(O')\}, p(2)\cup \{o^*\})$ is EF w.r.t $\pref$}\label{step:agent2}
						\STATE $p(2)\longleftarrow p(2)\cup \{o^*\}$
						\STATE $p(1)\longleftarrow p(1)\cup  \{\max_{>_1}(O')\}$
						\STATE $O' \longleftarrow O'\setminus \{\{\max_{>_1}(O')\}$
					\ELSE
					\STATE $C\longleftarrow C \cup \{o^*\}$	
						\ENDIF
			\ENDIF
			\ENDWHILE
		\RETURN $(p(1),p(2))$		
		 \end{algorithmic}
		\end{algorithm}

		
			First observe that for strict preferences, GAL is equivalent to the simplified AL method. The reason is that for strict preferences, there exists a unique priority order irrespective of any lexicographical tie-breaking order. We present a couple of examples to illustrate how GAL works. The contested pile is empty in one example and non-empty in another.
	
	\begin{example}
	\begin{align*}
		\pref_1:&\quad \{o_1,o_2,o_3\}, \{o_4,o_5,o_6\}\\
		\pref_2:&\quad \{o_2,o_3,o_4\},\{o_6\},\{o_1,o_5\}
	\end{align*}
	\begin{align*}
		>_1:&\quad o_1,o_2,o_3,o_5,o_6,o_4\\
		>_2:&\quad o_4,o_3,o_2,o_6,o_5,o_1
	\end{align*}
	\begin{enumerate}
		\item Round $1$: $p(1)=\{o_1\}$, $p(2)=\{o_4\}$, $C=\emptyset$;
			\item Round $2$: $p(1)=\{o_1,o_2\}$, $p(2)=\{o_4,o_3\}$, $C=\emptyset$;
												\item Round $3$: $p(1)=\{o_1,o_2,o_5\}$, $p(2)=\{o_4,o_3,o_6\}$, $C=\emptyset$. 
	\end{enumerate}
		\end{example}

		%

			\begin{example}
			\begin{align*}
				\pref_1:&\quad \{o_7\},\{o_1,o_2,o_3\}, \{o_4,o_5,o_6\}\\
				\pref_2:&\quad \{o_7\},\{o_1\},\{o_3\},\{o_4,o_5\},\{o_2,o_6\}
			\end{align*}
			\begin{align*}
				>_1:&\quad o_7,o_2,o_3,o_1,o_6,o_4,o_5\\
				>_2:&\quad o_7,o_1,o_3,o_5,o_4,o_6,o_2
			\end{align*}
		
			\begin{enumerate}
				\item Round $1$: $p(1)=\emptyset$, $p(2)=\emptyset$, $C=\{o_7\}$;
				\item Round $2$: $p(1)=\{o_2\}$, $p(2)=\{o_1\}$, $C=\{o_7\}$;
							\item Round $3$: $p(1)=\{o_2,o_3\}$, $p(2)=\{o_1,o_5\}$, $C=\{o_7\}$;
							\item Round $4$: $p(1)=\{o_2,o_3,o_4\}$, $p(2)=\{o_1,o_5,o_6\}$, $C=\{o_7\}$.			
			\end{enumerate}
				\end{example}


			\begin{proposition}
				GAL runs in $O(m^2)$ time and is deterministic. 
				\end{proposition}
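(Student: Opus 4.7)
The plan is to analyze the three pieces of work separately: the preprocessing phase that constructs the priority orders, the bound on the number of iterations of the main \textbf{while} loop, and the cost of each iteration.

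First, I would show that the priority orders $>_1$ and $>_2$ can be built in $O(m^2)$ time. Each of the orders is defined by examining, for every pair $(o_i,o_j)$, the relations $\succ_1,\succ_2,\sim_1,\sim_2$ (plus the index tie-breaker), which is $\binom{m}{2}$ comparisons and hence $O(m^2)$; once the pairwise comparators are fixed, sorting yields the linear order within the same budget. Since these orders are fixed at the outset and every subsequent query $\max_{>_1}(O')$ or $\max_{>_2}(O')$ can be answered by a single $O(m)$ scan of $O'$, no per-round sort is needed.

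Next, I would bound the loop. Every iteration either (a) places one object in $C$ (the singleton case in Step~\ref{step:O=1}, or the case where $o^*$ fails both EF tests), or (b) removes two objects from $O'$ and assigns them to $p(1),p(2)$. In either case $|O'|$ strictly decreases, so the loop executes at most $m$ times. Inside each iteration, the dominant costs are: computing at most two maxima over $O'$, which is $O(m)$; and performing at most two EF-feasibility checks on the candidate partial assignments, each of which is $O(m)$ by \lemref{lemma:efcheck} (with $n=2$ constant). All other updates (set additions/removals, adding to $C$) are $O(1)$ or $O(m)$. Thus the per-iteration cost is $O(m)$, and the total cost of the loop is $O(m^2)$. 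Combined with the $O(m^2)$ preprocessing, GAL runs in $O(m^2)$ time.

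For determinism I would simply observe that the construction of $>_1$ and $>_2$ is completely specified: every tie within $\pref_i$ is broken using the other agent's strict preference, and the remaining cross-ties (where the two agents are mutually indifferent) are broken deterministically by object index. Consequently $\max_{>_1}(O')$ and $\max_{>_2}(O')$ are uniquely determined at every step, and the branching in Steps~\ref{step:agent1}--\ref{step:agent2} always tries agent~$1$ first, so the algorithm makes no random or arbitrary choices. I do not anticipate a serious obstacle here; the only mildly delicate point is making sure the bound $O(m)$ on the number of iterations is justified in the contested-pile branches, where only one object is removed per round, so that the total iteration count is still at most $m$ rather than being pushed up by the asymmetric cases.
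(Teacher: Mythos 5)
Your proof is correct and follows essentially the same approach as the paper's: at most $m$ iterations of the loop, each costing $O(m)$ because the dominant work is the (at most two) EF checks handled by \lemref{lemma:efcheck}. You are in fact somewhat more thorough than the paper, which silently omits both the $O(m^2)$ preprocessing cost of building $>_1$ and $>_2$ and any explicit argument for determinism.
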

				\begin{proof}
In each round, either one object each is allocated to the agents or one contested object is sent to $C$. If each agent has a different highest priority unallocated object, then the allocation takes constant time. Otherwise, the agents have the same highest priority contested object $o^*$. In this case, we need to make at most two checks for whether there exists an EF partial assignment that allocated $o^*$ to one of the agents. In either of these checks, we simply need to verify whether the given partial assignment is EF or not which takes time $O(m)$ according to Lemma~\ref{lemma:efcheck}. Thus, GAL takes time $O(m^2)$.
\qed
\end{proof}
%

					\begin{proposition}\label{prop:maximal}
						GAL returns a maximal EF assignment.
						\end{proposition}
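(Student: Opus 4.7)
The plan is to establish two claims about the output $p$ of GAL: that $p$ is EF, and that no EF assignment strictly extends $p$. Both follow by induction on the round counter $t$ of the while loop, with appropriate invariants on the state $(p^t, C^t, O'^t)$. Throughout I invoke the equivalence EF $\Leftrightarrow$ SD envy-freeness, so every EF check reduces to inequalities on counts $|\{o' \succsim_i o\}\cap p^t(i)|$.

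For EF of $p^t$, the non-trivial round is the one giving $o_a=\max_{>_1}(O'^{t-1})$ to agent~1 and $o_b=\max_{>_2}(O'^{t-1})$ to agent~2. Inspecting the three clauses defining $>_1$ shows that $o_a>_1 o_b$ implies $o_a\succsim_1 o_b$, and symmetrically $o_b\succsim_2 o_a$. Thus for any threshold $o$, whenever $o_b\succsim_1 o$ also $o_a\succsim_1 o$, so the agent-1 SD gap widens (weakly), and similarly for agent~2. The contested branches either verify EF explicitly before allocating or do not allocate, so EF is preserved in those cases as well.

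For maximality I strengthen the invariant to: every EF assignment $q$ with $q(i)\supseteq p^t(i)$ for $i=1,2$ satisfies $q(i)\setminus p^t(i)\subseteq O'^t$. Once $O'^t=\emptyset$ this forces $q=p$, yielding maximality. In the allocating cases, the invariant is propagated because $q(-i)$ already contains the object just added to $p^t(i)$, so that object drops out of $q(i)\setminus p^t(i)$ and we land in $O'^t$.

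The main obstacle is the case where $o^*$ is placed in $C$ (so $p^t=p^{t-1}$); I must show $o^*\notin q(1)\cup q(2)$ for every EF extension $q$ of $p^{t-1}$. Suppose by contradiction $o^*\in q(1)$; the case $o^*\in q(2)$ is symmetric using the failure of the second feasibility test. By the inductive invariant at $t-1$ one can write $q(1)=p^{t-1}(1)\cup\{o^*\}\cup A$ and $q(2)=p^{t-1}(2)\cup B$ with $A,B\subseteq O'^{t-1}\setminus\{o^*\}$ and $|A|+1=|B|$. The failure of the first test --- that $(p^{t-1}(1)\cup\{o^*\},\,p^{t-1}(2)\cup\{y\})$ is not EF for $y=\max_{>_2}(O'^{t-1}\setminus\{o^*\})$ --- cannot come from agent~1, because $o^*\succsim_1 y$ (by the refinement argument of the previous paragraph) forces the agent-1 SD gap to only widen when passing from $p^{t-1}$ to this candidate. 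Hence the failure must come from agent~2: there is a threshold object $o$ with $o^*\succsim_2 o\succ_2 y$ at which the agent-2 SD inequality was tight at $p^{t-1}$. But every $o'\in O'^{t-1}\setminus\{o^*\}$ satisfies $y\succsim_2 o'$ (since $y$ is the $>_2$-maximum of that set) and hence $o\succ_2 o'$, so neither $A$ nor $B$ contributes any object $\succsim_2 o$. Evaluating agent~2's SD condition for $q$ at threshold $o$ then yields $0\geq 1$, contradicting the EF-ness of $q$ and closing the induction.
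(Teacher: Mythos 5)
Your proof is correct and follows essentially the same route as the paper's: EF is preserved round by round because the priority orders $>_1,>_2$ refine the true preferences, and an object sent to the contested pile is shown to be unallocatable in any EF extension because the failed feasibility test pins down a threshold object at which the envy deficit persists no matter which lower-ranked objects are subsequently added to either bundle. Your explicit invariant and tightness/counting argument merely formalize what the paper argues informally; the only branch you leave implicit is the one where a single leftover object is sent to $C$ without a feasibility test, which is immediate from the cardinality requirement $|q(1)|=|q(2)|$ of EF.
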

						\begin{proof}
								The GAL outcome is EF. This follows from the way the partial assignments are constructed so that EF is maintained. If $\max_{>_1}(O')= \max_{>_2}(O')$, then the partial assignment is only modified after checking that the modification still satisfies EF. If $\max_{>_1}(O')\neq \max_{>_1}(O')$, then each agent is given a most preferred unallocated object from $O'$. Since the partial allocation $p$ is EF, and for each $o\in p(i)$, $o\pref_i \max_{>_i}(O') \pref_i \max_{>_{-i}}(O')$, it follows that the allocation which gives $p(i)\cup \{\max_{>_i}(O')\}$ to each $i\in \{1,2\}$ is EF.

We now show that the outcome is a maximal EF assignment. Assume for contradiction that GAL's outcome $p$ is not maximal EF. This means that for some object $o\in C$ there exists an assignment $q$ that matches the objects matched by $p$ as well as $o$ and possible other objects. Consider the object $o$ that is the first object to be placed in the contested pile $C$ and consider the stage in Algorithm~\ref{algo:GAL} where $o$ was sent to $C$. If $o$ was given to agent $-i$, then agent $i$ was given the next highest priority object $o'$ according to $>_i$ which still leads to infeasibility of EF.  Clearly $o\succ_i o'$ or else the partial assignment $p$ at the stage wouldn't fail EF. For every other unallocated object $o''$ in $O'$ (that has not in the contested pile) at that stage, it holds that $o'\succsim_i o''$. Hence no object $o''$ can be given to agent $i$ while $o$ is given to $-i$ so that $p$ is still $EF$. By the same argument, every subsequent object that is placed in $C$ cannot be allocated to one of the agents without causing the other agent to be envious. 
							\qed
							\end{proof}

Next, we show that if there exists a complete EF assignment, then GAL returns a complete EF assignment.
\footnote{The argument in Theorem 3 of \cite{BKK14a} only shows that for strict preferences, AL finds maximally EF assignment. It does not show that for strict preferences, AL efficiently computes a complete EF assignment if a complete EF assignment exists.}
For the proposition, we require the following lemma which follows from \citep[Theorem 4(i), ][]{AGMW14b}.

	\begin{lemma}\label{lemma:prop}
		For the case of two agents, any partial assignment $p$ if EF iff for each $o\in p(i)\cup p(-i)$,
\[|p(i)\cap \{o'\midd o'\ \pref_i o\}|\geq |\{o'\in p(i)\cup p(-i)\midd o'\pref_i o\}|/2.\]
		\end{lemma}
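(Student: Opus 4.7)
The plan is to reduce the lemma to the earlier equivalence between EF and SD envy-freeness, and then exploit the fact that with only two agents the allocations $p(i)$ and $p(-i)$ partition their union. Having already established that EF is equivalent to SD envy-freeness, EF of $p$ amounts to the condition
\[ A_i(o) \geq B_i(o) \quad \text{for every } o \in O \text{ and every } i \in \{1,2\}, \]
where $A_i(o) := |\{o' \in p(i) : o' \pref_i o\}|$ and $B_i(o) := |\{o' \in p(-i) : o' \pref_i o\}|$.

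Next, using the disjointness $p(i) \cap p(-i) = \emptyset$, we have
\[ A_i(o) + B_i(o) = |\{o' \in p(i) \cup p(-i) : o' \pref_i o\}|, \]
so the inequality $A_i(o) \geq B_i(o)$ is algebraically equivalent to $A_i(o) \geq (A_i(o) + B_i(o))/2$, which is precisely the inequality appearing in the lemma. This immediately gives the forward direction (restricted to $o \in p(i) \cup p(-i)$).

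The remaining step is to show that, for the converse, the universal quantifier can be restricted from $o \in O$ to $o \in p(i) \cup p(-i)$ without loss. Fix an arbitrary $o \in O$. If no element of $p(i) \cup p(-i)$ is weakly $\pref_i$-preferred to $o$, then $A_i(o) = B_i(o) = 0$ and the SD inequality is trivial. Otherwise, let $o^\star$ be the $\pref_i$-least element of $p(i) \cup p(-i)$ satisfying $o^\star \pref_i o$. By minimality, the sets $\{o' \in p(i) \cup p(-i) : o' \pref_i o\}$ and $\{o' \in p(i) \cup p(-i) : o' \pref_i o^\star\}$ coincide, so $A_i(o) = A_i(o^\star)$ and $B_i(o) = B_i(o^\star)$, and the hypothesized inequality at $o^\star$ transfers to $o$. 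The only real subtlety, and the step I expect to be the main obstacle, is handling weak preferences carefully in this restriction-of-quantifier argument, but it ultimately reduces to the clean observation that $A_i$ and $B_i$ are constant between consecutive $\pref_i$-levels containing allocated objects, so it suffices to check the inequality at the allocated objects themselves.
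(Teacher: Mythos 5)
Your proof is correct, and it takes a genuinely different route from the paper: the paper does not prove this lemma at all, but simply asserts that it follows from Theorem~4(i) of \citet{AGMW14b}, an external technical report. Your argument is self-contained and rests on two sound ingredients. First, since $p(i)\cap p(-i)=\emptyset$, the displayed inequality is algebraically identical to the SD comparison $A_i(o)\geq B_i(o)$, and the paper has already established that EF is equivalent to SD envy-freeness, so everything reduces to the domain of quantification. Second, your restriction of the quantifier from all of $O$ to the allocated objects is handled correctly: $A_i$ and $B_i$ change value only at allocated objects, so any violation of the SD inequality at an arbitrary $o\in O$ already occurs at the $\pref_i$-least allocated object weakly preferred to $o$, and is vacuous when no such object exists. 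This is in substance the same observation the paper itself uses inside its proof that SD envy-freeness implies EF, where the Hall-violating witness is taken to be $\min_{\pref_i}(O')$, an allocated object. What your approach buys is independence from the cited technical report; what the citation buys the author is brevity. Two small points worth making explicit in a polished write-up: the condition is quantified over both agents $i\in\{1,2\}$ (the set $p(i)\cup p(-i)$ is the same for both, but the inequality is not), and taking $o$ to be a $\pref_i$-least allocated object yields $|p(i)|\geq |p(-i)|$ for each $i$, hence $|p(1)|=|p(2)|$, which recovers the cardinality requirement built into the definition of EF.
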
 

		\begin{proposition}\label{prop-complete}
	If there exists a complete EF assignment, then GAL returns a complete EF assignment.
			\end{proposition}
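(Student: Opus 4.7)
The plan is to prove, by induction on the round number of GAL, the stronger invariant: if the input admits a complete EF assignment, then at the start of every round $t$ some complete EF assignment $q^t$ satisfies $q^t(i) \supseteq p^t(i)$ for $i \in \{1,2\}$ and the contested pile $C^t$ is still empty. The base case takes $q^0$ to be any complete EF assignment. At termination, $O' = \emptyset$ together with $C = \emptyset$ forces GAL's output to allocate every object; it is EF by Proposition~\ref{prop:maximal}, so it is the desired complete EF assignment.

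For the inductive step I split on GAL's inner branch. The branch $|O'| = 1$ is vacuous under the invariant, because $|q^t(1)| = |q^t(2)|$ and $C^t = \emptyset$ force $|O|$ even. If the two priority maxima $o_1 := \max_{>_1}(O')$ and $o_2 := \max_{>_2}(O')$ are distinct, then $o_1 \pref_1 o_2$ and $o_2 \pref_2 o_1$ because each $>_i$ refines $\pref_i$, so $p^{t+1} = (p^t(1) \cup \{o_1\}, p^t(2) \cup \{o_2\})$ remains EF by a direct SD-count calculation; a case analysis on the positions of $o_1, o_2$ inside $q^t$, followed by a (possibly chained) swap between $q^t(1)$ and $q^t(2)$, yields $q^{t+1}$.

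The crux is the common-maximum branch $o^* := \max_{>_1}(O') = \max_{>_2}(O')$, where I show that under the invariant GAL cannot send $o^*$ to $C$. Consider the first trial, which gives $o^*$ to agent $1$ and $o_2' := \max_{>_2}(O' \setminus \{o^*\})$ to agent $2$. Because $o^* \pref_1 o_2'$, this trial can only violate EF on agent $2$'s side, and by Lemma~\ref{lemma:prop} failure at a threshold $o$ forces $o^* \pref_2 o \succ_2 o_2'$ together with
\[|p^t(2) \cap \{o' : o' \pref_2 o\}| \;=\; |p^t(1) \cap \{o' : o' \pref_2 o\}|.\]
Since $o_2' \pref_2 o''$ for every $o'' \in O' \setminus \{o^*\}$ (again because $>_2$ refines $\pref_2$), the condition $o \succ_2 o_2'$ implies that $o^*$ is the unique element of $O'$ that agent $2$ weakly prefers to $o$. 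Decomposing $q^t(i) = p^t(i) \cup (q^t(i) \cap O')$ at threshold $o$, if $o^* \in q^t(1)$ then the $q^t$ count on agent $2$'s side is $|p^t(2) \cap \{\pref_2 o\}|$ while the count on agent $1$'s side is $|p^t(1) \cap \{\pref_2 o\}| + 1$, contradicting $q^t$'s EF for agent $2$. Hence first-trial failure forces $o^* \in q^t(2)$, and symmetrically second-trial failure forces $o^* \in q^t(1)$; both failing at once is impossible. GAL therefore takes the successful trial, and a swap analogous to the distinct-max case produces $q^{t+1}$.

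The main obstacle is the swap bookkeeping needed to propagate $q^t$ to $q^{t+1}$, in particular when both newly-allocated objects of GAL lie on the same side of $q^t$: agent $2$'s side of any such swap is automatic from the relevant $\pref_2$-maximality in $O'$, but agent $1$'s side may require care, potentially via an exchange chain that absorbs slack from $q^t$'s SD-inequalities. The SD-counting argument that rules out the contested-pile branch, by contrast, is quite clean once one identifies the structural fact that $o^*$ is the only $O'$-object witnessing agent $2$'s preference above the bad threshold.
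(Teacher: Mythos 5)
Your argument for the common-maximum branch is correct and quite elegant: the observation that a failed first trial forces a threshold $o$ with $o^* \pref_2 o \succ_2 o_2'$, that $o^*$ is then the \emph{only} object of $O'$ that agent $2$ weakly prefers to $o$, and that this pins $o^*$ to $q^t(2)$ (while a failed second trial symmetrically pins it to $q^t(1)$) is a clean way to rule out the contested pile. But the proof as a whole has a genuine gap, and it sits exactly where you flag ``the main obstacle'': the inductive propagation of the invariant, i.e.\ the claim that a complete EF assignment $q^{t+1} \supseteq p^{t+1}$ exists whenever one exists for $p^t$. This is not bookkeeping --- it is the load-bearing step, and without it the contested-pile argument has no invariant to invoke (your decomposition $q^t(i)=p^t(i)\cup(q^t(i)\cap O')$ uses $q^t\supseteq p^t$ essentially). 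The problematic case is when both newly allocated objects lie in, say, $q^t(1)$: you must then trade $o_2=\max_{>_2}(O')$ out of $q^t(1)$ for some $b\in q^t(2)\cap O'$. Agent $2$'s side of such a swap is fine since $o_2\pref_2 b$, but agent $1$ gives up $o_2$ for an object it may strictly disprefer, and this can destroy agent $1$'s SD-inequalities at every threshold $o$ with $o_2\pref_1 o\succ_1 b$ unless there is slack of at least $2$ there. Nothing in your argument supplies that slack; the one easy sub-case is when $q^t(2)$ contains an object in agent $2$'s top indifference class of $O'$ (then an indifference swap works, since $o_2$ is by construction the member of that class least preferred by agent $1$), but when agent $2$ carries a surplus at that threshold this sub-case need not apply, and ``an exchange chain that absorbs slack'' is a hope, not a proof. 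A proof built on an unverified exchange lemma that is essentially as strong as the proposition itself is incomplete.

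For comparison, the paper avoids the extension invariant entirely. It goes by contradiction at the \emph{first} contested object $o$: since every unallocated object other than $o$ is strictly below $o$ in both agents' priority-induced preferences, failure of a trial at threshold $o$ combined with Lemma~\ref{lemma:prop} forces \emph{every} previously allocated object to be weakly preferred to $o$ by the relevant agent; if both trials fail this holds for both agents, so the set $p(1)\cup p(2)\cup\{o\}$ has odd size $2|p(1)|+1$ and each agent would need a strict majority of it in any complete EF assignment --- a counting contradiction with the existence of \emph{any} complete EF assignment, not just one extending $p$. If you want to rescue your induction, the most promising route is to strengthen your contested-pile dichotomy in the same spirit so that it contradicts the bare existence of a complete EF assignment (dropping the requirement $q^t\supseteq p^t$), which would let you discard the propagation step altogether.
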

			\begin{proof}
				Assume for contradiction that there exists a complete EF assignment but GAL does not return a complete EF assignment. Then there exists at least one object in the contested pile. Let us consider the first object $o$ that is placed in the contested pile. When $o$ is placed in the contested pile, let the partial EF allocation be $p$. Let the next priority available unallocated objects of $i$ and $-i$ be $o^i$ and $o^{-i}$ respectively where it could be possible that $o^i=o^{-i}$. Since $o$ is placed in the contested pile, this means that the assignment which gives  $p(i)\cup \{o\}$ to $i$ allocation and  $p(-i)\cup \{o^{-i}\}$ is not EF. Similarly, the assignment which gives  $p(i)\cup \{o^{i}\}$ to $i$ allocation and  $p(-i)\cup \{o\}$ is not EF. This implies that $o\succ_i o^i$ and $o\succ_{-i} o^i$.
	Let the rank $o$ in agent $i$'s priority list be $r_i(o)$ and the rank $o$ in agent $i$'s list be $r_{-i}(o)$. Now consider the objects in $p(i)$. All objects in $p(i)$ are have a better rank than $o$ for agent $i$. Secondly, in agent $i$'s priority list, if an object is not allocated to $i$, it is allocated to agent $-i$. Now agent $i$'s allocation $p(i)$ is such that if $o$ is given agent $-i$ and $o^{i}$ to agent $i$, the assignment is not EF. By Lemma~\ref{lemma:prop}, this means that
	$|p(i)|< |\{o'\in p(i)\cup p(-i)\midd o'\pref_i o\}|/2$. The assignment which allocates $o$ in addition one of the agents in addition to the partial assignment $p$ if not EF even if agent $i$ got his $|p(i)|$ most preferred objects. This means that there does not exist a complete EF assignment.\qed \end{proof}

				
				Next we show that the GAL outcome is LPO. Unlike in \citep{BKK14a}, we cannot use the characterization of \citet{BrKi05a} that if agents have \emph{strict} preferences, any assignment as a result of sequential allocation is Pareto optimal. Hence we need a lemma.
				
Let $(N,O,\pref)$ be an assignment problem and $p$ be a discrete assignment.
				We will create an auxiliary assignment problem and assignment where each agent is allocated exactly one object~\citep[see \eg ][]{AGMW14b}.
				The \emph{clones} of an agent $i\in N$ are the agents in $N_i'=\{i_o \midd o \in O \text{ and } o\in p(i)\}$.
				The \emph{cloned assignment problem} of $(N,O,\pref)$ is $(N',O,\pref')$ such that 
				$N'= \bigcup_{i\in N} N_i'$.
				and for each $i_o\in N'$, $\pref'_{i_o}=\pref_i$. The \emph{cloned assignment} of $p$ is the discrete assignment $p'$ in which $o\in p'(i_o)$ if $o\in p(i)$ and $o\notin p'(i_o)$ otherwise. A cloned assignment can easily be transformed back into the original assignment where each agent $i\in N$ is allocated all the objects assigned by $p'$ to the clones of $i$.

				%
				%


%
%

					\begin{lemma}\label{lemma:cycle2}
						An assignment for two agents is LPO iff there exist no objects $o,o'$ such that $o$ is allocated to $i$, $o'$ is allocated to $-i$, $o'\succ_i o$ and $o\succsim_{-i} o'$.
						\end{lemma}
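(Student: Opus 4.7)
The plan is to prove both directions by a direct swap construction for $(\Rightarrow)$ and by a Hall-plus-cycle argument for $(\Leftarrow)$.

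For $(\Rightarrow)$ I would argue by contrapositive. Suppose there exist $o\in p(i)$ and $o'\in p(-i)$ with $o'\succ_i o$ and $o\pref_{-i} o'$, and let $q$ be obtained from $p$ by swapping these two objects across the two agents. Since $q$ reallocates exactly the same total set of objects as $p$, failure of LPO reduces to weak SD dominance on each side with one strict improvement. A direct count of $|\{x: x\pref_j o''\}\cap\cdot|$ at each threshold $o''$ confirms $q(i)\succ_i^{SD} p(i)$ (with the strict gap occurring at $o''=o'$) and $q(-i)\pref_{-i}^{SD} p(-i)$, so $p$ is not LPO.

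For $(\Leftarrow)$, suppose $p$ is not LPO and fix a witness $q$ on the same total object set. A threshold comparison at the least preferred object of $p(j)$ for agent $j$ rules out $|q(j)|<|p(j)|$, so $|p(j)|=|q(j)|$ for both $j$. Let $A=p(1)\setminus q(1)$ and $B=q(1)\setminus p(1)$; then $A=q(2)\setminus p(2)$, $B=p(2)\setminus q(2)$, and $|A|=|B|$. The SD relations $q(1)\pref_1^{SD} p(1)$ and $q(2)\pref_2^{SD} p(2)$ translate, via Hall's theorem on the bipartite graph with edge $\{a,b\}$ whenever $b\pref_1 a$ (and the symmetric graph for agent~$2$), into bijections $f_1\colon A\to B$ with $f_1(a)\pref_1 a$ for all $a$ and $f_2\colon B\to A$ with $f_2(b)\pref_2 b$ for all $b$. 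Taking WLOG $q(1)\succ_1^{SD} p(1)$ forces $f_1(a^*)\succ_1 a^*$ for some $a^*\in A$, since otherwise every threshold count would coincide.

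The main step, and the main obstacle, is the cycle argument. Assume for contradiction that no swap pair exists. Then for every $(a,b)\in p(1)\times p(2)$: the absence of a swap with $i=1$ gives $b\succ_1 a\Rightarrow b\succ_2 a$, and the absence of a swap with $i=2$ gives, in contrapositive form, $b\pref_1 a\Rightarrow b\pref_2 a$. Decompose the permutation $f_2\circ f_1$ of $A$ into cycles, pick the cycle $a_1\to a_2\to\cdots\to a_k\to a_1$ through $a^*$, and set $b_j=f_1(a_j)$, so that $b_j\pref_1 a_j$ and $a_{j+1}\pref_2 b_j$. The implication $b\pref_1 a\Rightarrow b\pref_2 a$ yields $b_j\pref_2 a_j$, hence $a_{j+1}\pref_2 b_j\pref_2 a_j$ all along the cycle; closing the loop forces $a_j\sim_2 b_j$ for every $j$, and then the contrapositive of $b\succ_1 a\Rightarrow b\succ_2 a$ combined with $b_j\pref_1 a_j$ forces $b_j\sim_1 a_j$ for every $j$, contradicting the strict improvement $b_{j^*}\succ_1 a_{j^*}$. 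I expect the trickiest part to be the clean Hall-style translation of weak SD dominance into a bijection (rather than a mere injection) in the presence of indifferences, and leveraging the fact that with exactly two agents the composition $f_2\circ f_1$ lands back in $A$, so every cycle closes and the contradiction is available.
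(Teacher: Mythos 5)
Your proof is correct, but it takes a genuinely different route from the paper's. The paper clones each agent into one clone per owned object, invokes the known equivalence of Pareto optimality between an assignment and its cloned assignment, and then uses the trading-cycle characterization of Pareto optimality for housing markets with indifferences; the bulk of its argument is a combinatorial reduction showing that any improving trading cycle can be shortcut first to one alternating between clones of the two agents and then to a two-clone cycle, which is exactly the swap pair in the statement. You instead work directly with the SD-dominating assignment $q$: you show $|p(j)|=|q(j)|$, extract bijections $f_1\colon A\to B$ and $f_2\colon B\to A$ between the symmetric differences via Hall's theorem applied to the threshold inequalities, and then chain the weak preferences of agent $2$ around a cycle of the permutation $f_2\circ f_1$ to force indifference everywhere, contradicting the strict improvement guaranteed for agent $1$. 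Your version buys self-containment (only Hall's theorem is used, rather than two cited results), an explicit treatment of the easy direction via the swap computation (which the paper leaves implicit), and a cleaner closing step than the paper's somewhat informal ``without loss of generality $i_1$ points to a strictly preferred object'' reduction; the paper's version is shorter on the page and situates the lemma within the standard trading-cycle machinery. Both arguments are, at heart, cycle arguments, and your observation that with two agents the composition $f_2\circ f_1$ is a permutation of $A$ whose cycles must close is precisely what replaces the paper's reduction to two-clone trading cycles.
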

												\begin{proof}
													By \citep[Lemma 2, ][]{AGMW14b}, 
													an assignment is Pareto optimal if and only if its cloned assignment is Pareto optimal for the cloned assignment problem. Hence, we can restrict our attention to the cloned assignment and the cloned assignment setting.
													If the cloned assignment is Pareto optimal, the original assignment is Pareto optimal. 
													If the cloned assignment is \emph{not} Pareto optimal, then there exists a `trading cycle' in which each object points to its owner, each cloned agent in the cycle points to an object that is at least as preferred as its own object and at least one agent in the cycle points to a strictly more preferred object than the one it owns~\citep{AzKe12a}.

													Firstly, we claim that there exists no trading cycle consisting only of clones of one agent. Assume for contradiction that there exist a trading cycle consisting of only of clones of the same agent. Then there exists at least one object that is minimally preferred. The agent who points to this object also owns a minimally preferred object. Hence each agent owns a minimally preferred object and thus the cycle is not Pareto improving. 

						We now show that, if there exists a trading cycle, then there exists one which alternates between clones of the two agents. Consider any cycle which has the following path consisting of multiple clones of the same agent in succession: 
						\[o_{c_1}\rightarrow i_{c_1}\rightarrow o_{c_2}\rightarrow i_{c_2} \cdots \rightarrow o_{c_k} \rightarrow i_{c_k}\rightarrow o_{c_{k+1}} {-i}_{c_{k+1}}.\]
						Since clones of each agent $i$ have the identical preference, $i_{c_1}$ also points directly to $ o_{c_{k+1}}$. Hence, we know that there is also a path 
						\[o_{c_1}\rightarrow i_{c_1} \rightarrow o_{c_{k+1}} \rightarrow {-i}_{c_{k+1}} .\]

						We now show that if there exists a trading cycle which alternates between clones of the two agents, then there exists one with exactly one clone of each agent. By the definition of trading cycle, at least one agent points to a strictly more preferred over the object he owns. 
Assume that a clone of  agent $i$ gets a strictly more preferred object in the trading cycle. Let such a clone be $i_j$ that points to $o^*$.
Consider the clone $i_1$ of agent $i$ who has the least preferred object among all clones of $i$. We can assume without loss of generality that $i_1$ points to a strictly more preferred object that the one he owns. If this were not the case, then we know that $i_j$ has a trading path to $i_1$ and $i_1$ also strictly prefers $o^*$ over the object he owns. This means that there is trading cycle in which $i_1$ points to a strictly more preferred object owned by a clone of $-i$.
Hence,  without loss of generality let the agents in the trading cycle have the following sequence where $i_1$ points to and strictly prefers the object of $-i_1$ over his own object: 
\[i_1, -i_1, i_2, -i_2, \ldots, i_k, -i_k, i_1.\]
If clone $i_2$ is indifferent between his object and the object owned by $i_1$, then this means he strictly prefers $-i_1$'s object over his own object. But this means that $i_2$ and $-i_1$ weakly prefer each other's objects over their own object and $i_2$ strictly prefers $-i_1$'s object which means we have already 
shown that there exist $o,o'$ such that $o$ is allocated to $i$, $o'$ is allocated to $-i$, $o'\succ_i o$ and $o\succsim_{-i} o'$. 
Suppose for contradiction that $i_2$ has a strictly more preferred object than the object owned by $i_1$. Since $i_1$ has the least preferred object among all clones of $i$, it points to any object that $i_2$ points to. Since $i_2$ points to the object of $-i_2$, this means that $i_1$ strictly prefers the object of $-i_2$ over his own object. By the same argument, $i_1$ strictly prefers each object owned by the clones of $-i$ in the trading cycle. Since at least one clone of $-i$ points to the object of $i_1$, we have shown that there exist $o,o'$ such that $o$ is allocated to $i$, $o'$ is allocated to $-i$, $o'\succ_i o$ and $o\succsim_{-i} o'$.
%
%
%
%
%
%
													\end{proof}

			We use Lemma~\ref{lemma:cycle2} to obtain the following proposition. 
			
			\begin{proposition}\label{prop:lpo}
The GAL outcome is LPO. 
				\end{proposition}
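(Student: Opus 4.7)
The plan is to proceed by contradiction using \lemref{lemma:cycle2}: if the GAL outcome $p$ fails LPO, the lemma yields objects $o\in p(i)$ and $o'\in p(-i)$ with $o'\succ_i o$ and $o\succsim_{-i} o'$. I aim to show that such a swap pair is incompatible with GAL always awarding each agent the $>$-maximal available object (possibly after setting the contested object aside).

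The first step is to lift the two preference inequalities to the priority orders, establishing $o'>_i o$ and $o>_{-i} o'$. The inequality $o'>_i o$ is immediate since $>_i$ refines $\succ_i$. For $o>_{-i} o'$ I split on whether $o\succ_{-i} o'$ strictly (again immediate) or $o\sim_{-i} o'$; in the latter case the second clause of the definition of $>_{-i}$ breaks the tie in favour of the object less preferred by the other agent, and $o'\succ_i o$ then yields $o>_{-i} o'$.

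The second step examines the rounds $t_o$ and $t_{o'}$ at which $o$ is given to $i$ and $o'$ to $-i$, respectively. Note that whenever $t_o\leq t_{o'}$ the object $o'$ is still in $O'$ at the start of round $t_o$, and symmetrically for $o$ at round $t_{o'}$. Assume WLOG $t_o\leq t_{o'}$; the opposite ordering is handled symmetrically using $o>_{-i} o'$. If round $t_o$ is a non-conflict round, then $o=\max_{>_i}(O')$, contradicting $o'\in O'$ together with $o'>_i o$. If round $t_o$ is a conflict round with contested object $o^*$, then either (a) $o=o^*=\max_{>_i}(O')$, giving the same contradiction since $o'\in O'$ and $o'\neq o^*$ (as $o\neq o'$), or (b) $o^*$ is sent to $-i$ and $o=\max_{>_i}(O'\setminus\{o^*\})$. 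In case (b), if $o'\neq o^*$ then $o'\in O'\setminus\{o^*\}$ with $o'>_i o$ contradicts maximality; and if $o'=o^*$ then the contested object coincides with $o'$, so the alternative priority inequality $o>_{-i} o'$ applied to $o\in O'$ contradicts $o^*=\max_{>_{-i}}(O')$.

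The main obstacle is purely bookkeeping: keeping the branches of the case analysis disjoint and ensuring that both priority orders are deployed in the single corner case where the contested object itself coincides with $o'$. Once the sub-cases are laid out, every branch reduces to the observation that GAL never passes over a $>$-higher priority available object for an agent, except possibly for the single contested object of the round.
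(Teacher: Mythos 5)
Your proposal is correct and follows essentially the same route as the paper's proof: invoke Lemma~\ref{lemma:cycle2} to extract the swap pair $o,o'$, lift $o'\succ_i o$ and $o\succsim_{-i} o'$ to the priority orders via the tie-breaking clause, and derive a contradiction with GAL always awarding each agent its $>$-maximal available object. Your case analysis of the conflict rounds (including the corner case where the contested object coincides with $o'$) is in fact more explicit than the paper's rather terse "it would not have gone for $o'$ before $o$" argument, but the underlying idea is identical.
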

		\begin{proof}
	Let us constrain ourselves to the set of objects $O'\subseteq O$ that are allocated to agents $1$ and $2$. Now let $(N',O',\pref')$ be the cloned assignment problem. Then assignment $p$ for objects in $O'$ is PO iff the corresponding assignment is PO for $(N',O',\pref')$.
	Now assume that the GAL outcome is not LPO. Then the assignment with respect to $O'$ is not PO. By Lemma~\ref{lemma:cycle2}, there exists $i\in \{1,2\}$ such that $i$ gets $o$ in some round $t$, $o'\succ_i o$ where $o'$ was allocated to $-i$ and $o\succsim_{-i} o'$. This means that $o'$ was allocated to $i$ in round $t'\leq t$. 
Now if  $o\succ_{-i} o'$, then $o$ would be a higher priority object for $-i$ so that it would not have gone for $o'$ before $o$.	Then it must be that $o\sim_{-i} o'$. But, if $o\sim_{-i} o'$, then $o$ would again be a higher priority object for $-i$ so that it would not have gone for $o'$ before $o$. Hence a contradiction.
			\qed
			\end{proof}


		In Proposition~\ref{prop:lpo}, we showed that there exists no other (not necessarily EF) assignment that uses the same objects as the GAL outcome and is Pareto improvement over the GAL outcome. 
		Next we show that there exists no other EF assignment that may use any objects and is a Pareto improvement over the GAL outcome.


		\begin{proposition}\label{prop:SD-dominate}
	GAL returns an assignment such that there exists no other assignment that Pareto dominates it and is envy-free. 
			\end{proposition}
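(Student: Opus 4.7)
My plan is to argue by contradiction: suppose there exists an EF assignment $q$ with $q(1) \succsim_1^{SD} p(1)$ and $q(2) \succsim_2^{SD} p(2)$, and with at least one inequality strict. Since both $p$ and $q$ are EF, $|p(1)|=|p(2)|$ and $|q(1)|=|q(2)|$; SD-domination moreover forces $|q(i)| \geq |p(i)|$, so $|q(1) \cup q(2)| \geq |p(1) \cup p(2)|$. I would then split into two cases depending on whether $q$ uses any object from the contested pile $C$.

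If $q$ uses no object of $C$, then $q(1) \cup q(2) \subseteq p(1) \cup p(2)$, and the size comparison forces $q(1) \cup q(2) = p(1) \cup p(2)$. In this case $q$ uses the same set of objects as $p$, so Proposition~\ref{prop:lpo} (which says $p$ is LPO) directly contradicts the strict Pareto-domination of $p$ by $q$. This disposes of the first case cleanly.

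In the remaining case, $q$ uses some object of $C$. Let $o^*$ be the chronologically first object placed into $C$ by GAL that is used by $q$, and without loss of generality assume $o^* \in q(1)$. Let $t$ be the round at which $o^*$ was placed in $C$, with $p_t$ the corresponding (EF) partial assignment. At that point $o^* = \max_{>_1}(O'_t) = \max_{>_2}(O'_t)$, and both candidate extensions $(p_t(1) \cup \{o^*\}, p_t(2) \cup \{b\})$ and $(p_t(1) \cup \{a\}, p_t(2) \cup \{o^*\})$ failed EF. The choice of $o^*$ also ensures that $q$ uses no object from $C_{t-1}$, so every object in $q(1) \cup q(2)$ that is strictly $\pref_1$- or $\pref_2$-preferred to $o^*$ lies in $p_t(1) \cup p_t(2)$. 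Combining the EF balance characterization of Lemma~\ref{lemma:prop} applied to $q$ with the pointwise SD-domination of $p$, I would show that the ``top portion'' of $q$ (objects in $q(i)$ that are $\pref_i$-at-least-as-good-as $o^*$) yields an EF structure implying that one of the two candidate partial assignments above in fact satisfies the EF balance condition, contradicting GAL's placement of $o^*$ into $C$.

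The expected main obstacle is this final translation step: turning $q$'s global EF property and the pointwise SD-domination of $p$ by $q$ into the strictly local EF check that GAL performs at round $t$ on a specific partial assignment. The tie-breaking design of the priority orders $>_1$ and $>_2$ (giving higher priority within each $\pref_i$-indifference class to the object less preferred by the other agent) is critical here: it guarantees that $b = \max_{>_2}(O'_t \setminus \{o^*\})$ is the ``right'' companion object for agent $2$ should any EF extension of $p_t$ assigning $o^*$ to agent $1$ exist. Linking this structural feature of the priority orders to the allocation produced by $q$, which need not respect them, is the technical heart of the argument.
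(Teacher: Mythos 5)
Your argument splits into two cases, and only the first is actually proved. Case~1 ($q$ uses no object of $C$) is correct and clean: SD-domination forces $|q(i)|\geq|p(i)|$, so $q(1)\cup q(2)\subseteq p(1)\cup p(2)$ together with the size count gives $q(1)\cup q(2)=p(1)\cup p(2)$, and Proposition~\ref{prop:lpo} finishes it. But Case~2 is a plan, not a proof. The entire content of the proposition in that case is the implication ``an EF assignment $q$ with $q(i)\succsim_i^{SD}p(i)$ for both agents and strict for one exists and uses a contested object $o^*$ $\Rightarrow$ one of the two tentative extensions of $p_t$ at round $t$ passes the EF check,'' and you state this implication (``I would show that \ldots'') without deriving it. The difficulty you flag is real and is not routine: the top portion of $q$ (objects at least as good as $o^*$) lives inside $p_t(1)\cup p_t(2)$, but $q$ may partition those objects between the two agents differently from $p_t$, so the EF balance condition of Lemma~\ref{lemma:prop} for $q$ does not transfer to the specific partial assignments $(p_t(1)\cup\{o^*\},p_t(2)\cup\{b\})$ or $(p_t(1)\cup\{a\},p_t(2)\cup\{o^*\})$ that GAL tests. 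Bridging that requires using SD-domination to pin down how $p_t(i)$ relates, equivalence class by equivalence class, to $q$'s top portion, and nothing in the write-up does this. As it stands the proposal establishes the proposition only for dominating assignments that avoid the contested pile.

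For comparison, the paper's own proof takes a different and more direct route: it scans the rounds of GAL, comparing in each round the highest-priority objects that $p$ and $q$ give to each agent, locates the first round $t$ at which $q^t(i)\succ_i^{SD}p^t(i)$ for some agent while $q^t(-i)\succsim_{-i}^{SD}p^t(-i)$, and argues that GAL's greedy choice in that round would already have handed out those better objects --- a first-divergence argument on the priority orders rather than a case analysis on the contested pile. Your Case~1 reduction to local Pareto optimality is a nice simplification for the subfamily of dominating assignments it covers, but to complete the proof you must either carry out the missing balance argument in Case~2 or switch to a round-by-round comparison of the kind the paper uses.
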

\begin{proof}
	Assume for contradiction that GAL's outcome $p$ is SD-dominated by 
another EF assignment $q$ such that $q(i)\pref_i^{SD} p(i)$ for both $i$ and 	$q(i)\succ_i^{SD} p(i)$ for at least one $i$.
				We now proceed in rounds where in each round we check the highest priority allocated object of each of the two agents that have not been checked. We check the partial assignments $p^t$ and $q^t$ in each round $t$ to see whether $q^t(i)\succ_i^{SD} p^t(i)$. Let us assume that $q^t(i)\succ_i^{SD} p^t(i)$ and $q^t(-i)\pref_{-i}^{SD} p^t(-i)$ for the smallest possible $t$ . If both $q^t(i)\succ_i^{SD} p^t(i)$ and $q^t(-i)\succ_{-i}^{SD} p^t(-i)$, 
then it means that in $q$ both get higher priority objects than $p$ in that round.				
				This is a contradiction as GAL would allocated these higher priority objects to the agents. Now assume that
$q^t(i)\succ_i^{SD} p^t(i)$ and $q^t(-i)\sim_{-i}^{SD} p^t(-i)$. This means that agent $-i$ gets an equally preferred object and the other agent $i$ gets a higher priority object. But this is again a contradiction, because GAL would have allocated the more preferred object to $i$ in that round.		\qed	
\end{proof}


		%
		%
		%
		%

		Note that for the case of two agents, \citet{AGMW14a} presented a polynomial time algorithm to check whether a complete SD-envy-free assignment exists or not. In order to compute a maximal SD envy-free assignment, one can consider different subsets  $O'\subset O$ and check whether a complete SD-envy-free assignment exists or not for $O'$. However this approach would require checking exponential number of subsets.

		We have already shown that GAL satisfies the desirable properties of AL on a more general domain. Next, we show that under strict preferences GAL returns an assignment that is a possible outcome of AL. In this sense, GAL is a `proper' generalization of AL.  
		
				\begin{proposition}
					For strict preferences, GAL returns an  AL outcome.
			\end{proposition}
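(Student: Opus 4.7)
My plan is to show that under strict preferences the construction of $>_1$ and $>_2$ collapses to the underlying preferences themselves, and then to check that GAL's step--by--step behaviour coincides with the simplified version of AL sketched at the start of \secref{algo:GAL} (which \citet{BKK14a} already established as a faithful implementation of AL).

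First, I would verify that under strict preferences, $>_1\; = \;\succ_1$ and $>_2\; = \;\succ_2$. Indeed, the second clause in the construction of $>_1$ requires $o_i \sim_1 o_j$, and the third clause requires $o_i \sim_1 o_j$ together with $o_i \sim_2 o_j$; under strict preferences both of these fail for $i \ne j$. Hence only the first clause fires, which says $o_i >_1 o_j$ iff $o_i \succ_1 o_j$. The argument for $>_2$ is symmetric.

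Second, I would match up the loop of GAL with a round of the simplified AL. In each iteration, GAL compares $\max_{>_1}(O')$ and $\max_{>_2}(O')$, which (by the first step) are the agents' most preferred unallocated objects. If these differ, GAL gives each agent this object; if they coincide, GAL first tries to allocate the contested object $o^*$ to agent 1 (together with the next $>_2$-maximal object to agent 2) and checks EF; failing that, it tries the reverse allocation; failing both, $o^*$ is pushed to $C$. This is the exact sequence of checks performed by the simplified AL in each round, with the same tie-breaking in favour of agent 1. Consequently, the partial allocations $(p(1),p(2))$ and the contested pile $C$ evolve identically.

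Third, since the simplified AL was already argued to realise AL's properties in the sense of \citep{BKK14a}, the output $(p(1),p(2),C)$ of GAL on a strict-preference instance is a valid AL outcome. The main technical obstacle I anticipate is being precise about what ``an AL outcome'' means under the original formulation, since AL as presented by \citet{BKK14a} has a slightly different (but equivalent) feasibility check; I would reduce this to showing that their feasibility condition coincides with checking EF of the tentative partial assignment, which was already observed in the footnote accompanying the simplified description of AL.
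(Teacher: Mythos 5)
Your plan follows essentially the same route as the paper's proof: observe that strict preferences make the priority orders coincide with the preferences (the paper phrases this as the priority order being unique irrespective of tie-breaking), then match GAL's loop case-by-case against AL's stages, including the stopping condition, the single-remaining-object case, the uncontested case, and the contested case with the same order of tentative assignments. One point to be careful about: the substantive content of the paper's proof is precisely the step you defer, namely showing that AL's feasibility count (the number of objects assigned to $-i$ or unassigned that $i$ prefers to its next available object is at most $t$) is equivalent to the tentative partial assignment being EF. The footnote you cite merely \emph{asserts} this equivalence; the paper's proof of this proposition is where it is actually argued, via the inequality $|\{o'\in p(i)\midd o'\succ_i o^* \}|\geq |\{o'\in p(i)\cup p(-i)\midd o'\succ_i o^* \}|/2$ together with the observation that $i$'s earlier allocation consists only of objects strictly preferred to $o^*$, so the new round cannot introduce envy if none existed before. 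You should supply that argument rather than point back to the footnote, since otherwise the justification is circular. A second minor point the paper handles explicitly and you should too: AL is nondeterministic (agent $i$ may be given an even less preferred unallocated object in the contested case), so the claim is that GAL's output is \emph{one} possible AL outcome, corresponding to the instantiation in which $i$ receives its next most preferred available object.
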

					\begin{proof}
						For strict preferences, there exists a unique priority order irrespective of any lexicographical tie-breaking order. We show that under strict preferences, GAL and AL handle all the cases in an equivalent manner.

		Let us compare the formal definition of AL \citep[Page 133-134][]{BKK14a} with the pseudocode of GAL.
		In AL, in stage $t$, the direction ``If one unallocated item remains, place it in CP and stop'' is equivalent to Steps \ref{step:O=1} and \ref{stepb:O=1} of Algorithm~\ref{algo:GAL}. 

		In AL, in stage $t$, the direction ``If no unallocated items remain, stop.'' is equivalent to the stopping condition in the while loop of Algorithm~\ref{algo:GAL}.

		If both agents have different most preferred (equivalent to highest priority since the preferences are strict) unallocated objects, then both GAL and AL behave in the same manner and give the most preferred objects to the agents. For AL this direction is specified in the last sentence of the stage $t~1)$.

		Finally, both algorithms have a check for when both agents have the same most preferred objects with this check being in step  $t~2)$ in the specification of the AL method. In AL, the most preferred available contested object $i$  is tentatively given to the one of the agents.
		In the specification of Algorithm~\ref{algo:GAL}, the most preferred available object is also tentatively given to one of the agents. 
		Since, in Algorithm~\ref{algo:GAL}, this object is referred to as $o^*$ so we will refer it as $o^*$ for both algorithms. 
		Let us say agent who gets it is agent $-i$. The other agent $i$ is tentatively given the next most preferred object that is not yet allocated. In the description of AL, $i$ could be given an even less preferred unallocated object but in at least one instantiations of AL, $i$ is tentatively given the next most preferred object that is still available. 
According to AL, such a tentative assignment is \emph{feasible} as long as the number of objects assigned to $-i$ including $o^*$ or put in the contested pile (``unassigned'') that $i$ prefers to the next most preferred unallocated object is at most $t$. This means that for the tentative assignment $p$, 
$|\{o'\in p(i)\midd o'\succ_i o^* \}|\geq  |\{o'\in p(i)\cup p(-i)\midd o'\succ_i o^* \}|/2$.
Since agent $i$'s allocation from the previous round consists of objects strictly preferred over $o^*$, this means that $i$ is not envious of $-i$ in $p$ as long as $i$ was not envious of $-i$ in the previous round. 
Thus in both algorithms, the tentative assignment in which the contested object is given to agent $-i$ and the next most preferred unallocated object is given to agent $i$ is made permanent if the modification does not cause envy. Hence the feasibility check in the case of AL is equivalent to checking whether the tentative new assignment is EF. If the tentative assignment is not EF for $o^*$ given to either of the two agents, then GAL puts $o^*$ in the contested pile. Similarly, AL puts the object in the contested pile (Step $t~5)$).
		\qed
						\end{proof}

		\section{Discussion}
		
		In this paper, we presented GAL that is a generalization of the AL method of \citet{BKK14a} for the fair allocation of indivisible objects among two agents. A crucial advantage of extending AL to GAL is for the case in which agents have identical preferences. If agents have strict and identical preferences, then AL assigns all the objects to the contested pile. However if the preferences are really coarse, such as when all objects are equally preferred, then GAL assigns $\floor{m/2}$ to each agent.

		GAL can also be used as an algorithm to solve previously studied problems within fair division:

		\begin{theorem}
			There exists a $O(m^2)$ algorithm to check whether there exists a complete assignment that is EF.
			\end{theorem}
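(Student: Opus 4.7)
The plan is to obtain this algorithm as an immediate corollary of GAL together with Proposition~\ref{prop-complete}. First I would run GAL on the given instance $((\pref_1,\pref_2),O)$; by the earlier proposition this takes time $O(m^2)$ and produces a partial assignment $p=(p(1),p(2))$ together with a contested pile $C$. Then I would inspect whether $C=\emptyset$, i.e., whether $|p(1)|+|p(2)|=m$; this test is $O(m)$.

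The correctness is where Proposition~\ref{prop-complete} does the work. In one direction, if $C=\emptyset$ then $p$ itself is a complete EF assignment (GAL's output is always EF by Proposition~\ref{prop:maximal}), and we may answer YES. In the other direction, suppose a complete EF assignment exists; Proposition~\ref{prop-complete} says that GAL must then return one, so $C=\emptyset$. Contrapositively, if $C\neq\emptyset$ then no complete EF assignment exists, and we may answer NO. This is strictly a YES/NO decision, but the algorithm also exhibits an EF assignment as a witness whenever one exists.

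Putting it together, the total running time is $O(m^2)+O(m)=O(m^2)$. I do not anticipate any real obstacle here: the content of the theorem is entirely absorbed into GAL's correctness and complexity, which were established in the preceding section. The only thing worth highlighting is that the previous best algorithms of \citet{BEL10a} and \citet{AGMW14a} ran in $O(m^3)$ and required solving a network flow or matching subroutine, whereas GAL is a direct greedy procedure whose per-round work is dominated by a single $O(m)$ EF check (Lemma~\ref{lemma:efcheck}).
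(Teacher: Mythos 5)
Your proposal is correct and takes essentially the same route as the paper: the paper's proof is a one-line appeal to Proposition~\ref{prop-complete} (run GAL, and if a complete EF assignment exists GAL returns one), with the $O(m^2)$ bound inherited from GAL's running time. You merely spell out both directions of the equivalence and the emptiness test on the contested pile, which the paper leaves implicit.
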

			\begin{proof}
				By Proposition~\ref{prop-complete}, if there exists a complete EF assignment, GAL returns such an assignment.
				\qed
				\end{proof}

				Previous algorithms to solve this problem take time $O(m^3)$ and require solving network flow or maximum matching problems~\citep{BEL10a,AGMW14a}.

			GAL is specifically designed for the case of two agents. This raises the question whether 
			GAL can be generalized to the case of arbitrary number of agents. 		
			
				 \begin{theorem}
					 Assume there exists an algorithm $\mathcal{A}$ that returns a maximal envy-free assignment that is complete if a complete envy-free assignment exists. Then $\mathcal{A}$ does not take polynomial time assuming $P\neq NP$.
		\end{theorem}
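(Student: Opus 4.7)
The plan is to reduce from the decision problem of whether a complete envy-free assignment exists, which is known to be NP-hard in the number of agents. \citet{BEL10a} (and, with weak preferences, \citet{AGMW14a}) establish that deciding the existence of a complete EF assignment for an arbitrary number of agents is NP-complete, even when preferences are strict. I would invoke this result directly as the source of hardness.

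Given such an algorithm $\mathcal{A}$, I would construct a polynomial-time decision procedure for the NP-hard existence problem as follows. On input $(N,O,\pref)$, run $\mathcal{A}$ to obtain a maximal envy-free assignment $p$. Then check whether $p$ is complete by verifying $\bigcup_{i\in N} p(i)=O$, which takes time linear in $m$. By the defining property of $\mathcal{A}$, a complete EF assignment exists if and only if the returned $p$ is itself complete; report ``yes'' or ``no'' accordingly. This decision procedure runs in polynomial time whenever $\mathcal{A}$ does, so a polynomial-time $\mathcal{A}$ would yield a polynomial-time algorithm for the NP-hard existence problem, contradicting $P\neq NP$.

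The only conceptual step is identifying the correct hardness reference; once it is available, the reduction from the decision problem to the search problem solved by $\mathcal{A}$ is immediate and does not depend on any further properties of $\mathcal{A}$ beyond the statement of the theorem. The main obstacle, therefore, is simply ensuring that the cited NP-hardness result applies in exactly the ordinal EF setting considered here (with $n\geq 3$ agents), which indeed is the case in the references above.
\qed
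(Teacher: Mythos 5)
Your proposal is correct and follows essentially the same route as the paper: both reduce from the NP-complete problem of deciding whether a complete EF assignment exists (citing \citet{BEL10a} for strict preferences and \citet{AGMW14a} for dichotomous preferences) and observe that running $\mathcal{A}$ and checking completeness of its output decides that problem. If anything, your version is slightly cleaner, since you derive the ``output incomplete implies no complete EF assignment exists'' step directly from the assumed defining property of $\mathcal{A}$, whereas the paper appeals to its Proposition on GAL, which is stated only for the two-agent algorithm.
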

						\begin{proof}
							\citet{BEL10a} proved that checking whether there exists a complete EF assignment is NP-complete for strict preferences. \citet{AGMW14a} proved that checking whether there exists a complete EF assignment is NP-complete for dichotomous preferences.
If $\mathcal{A}$ is polynomial-time, then it can be used to compute a maximal EF assignment. If the assignment is complete, we know that there exists a complete EF assignment. If the assignment is not complete, we know by Proposition~\ref{prop-complete} that there does not exist a complete EF assignment. Hence a polynomial-time algorithm to compute a maximal EF  assignment can solve an NP-complete problem in polynomial time. 					\qed
				\end{proof}

		%
		%

		GAL can also be seen as a discrete version of the \emph{probabilistic serial (PS) algorithm}~\citep{BoMo01a,KaSe06a} that is used to compute a fractional assignment. PS is SD-efficient and SD-envy-free. In other words, PS returns a maximal fractional assignment that is both SD-efficient and SD-envy-free.
In the randomized setting, there is always a complete assignment that satisfies both properties.	Similarly, a GAL outcome is a maximal discrete assignment that is both SD-efficient and SD-envy-free. 
If we restrict ourselves to discrete assignments, then there may not exist a complete and envy-free assignment.

In this paper, we assumed that all objects are acceptable. The case where some objects may be unacceptable to an agent can be handled. If an object is unacceptable to both agents, it can be discarded from the outset. If an object is only acceptable to one agent, it will only be given to that agent.

It will be interesting to apply the approach of maximal EF to weaker notions of fairness~\citep{AGMW14a, BEL10a}. Finally, extending GAL to the case of constant number of agents is left as future work.

\subsubsection*{Acknowledgments}
The author thanks Steven Brams for sharing the paper on the AL method with him. He also appreciates Sajid Aziz, Steven Brams and Christian Klamler for their useful feedback.
NICTA is funded by the Australian Government through the Department of Communications and the Australian Research Council through the ICT Centre of Excellence Program. 
		

\bibliographystyle{plainnat}

\end{document}